\newtheorem{theorem}{Theorem}
\newtheorem{definition}{Definition}
\newtheorem{lemma}{Lemma}
\newtheorem{remark}{Remark}
\newtheorem{assumption}{Assumption}
\newcommand{\x}{\hat x}
\newcommand{\e}{\hat e}
\newcommand{\NN}{\mathcal{N}}
\newcommand{\RR}{\mathbb{R}}
\newcommand{\EE}{\mathbb{E}}
\newcommand{\ZZ}{\mathbb{Z}}
\title{\LARGE \bf
LQG for Constrained Linear Systems: \\ Indirect Feedback Stochastic MPC with Kalman Filtering
}
\author{Simon Muntwiler$^{\star,1}$, Kim P. Wabersich$^{\star,1}$, Robert Miklos$^{2}$, and Melanie N. Zeilinger$^{1}$
\thanks{$^\star$The first two authors contributed equally to this work.}
\thanks{$^{1}$Simon Muntwiler, Kim P. Wabersich, and Melanie N. Zeilinger are members of the Institute for Dynamic Systems and Control, ETH Zurich, Zurich, Switzerland {\tt\small $\{$simonmu, wkim, mzeilinger$\}$@ethz.ch}.}%
\thanks{$^{2}$Robert Miklos is an independent scholar {\tt\small robert.mik @hotmail.com}.}%
\thanks{This work was supported by the Bosch Research Foundation im Stifterverband.}
}
\begin{document}

\maketitle
\thispagestyle{empty}
\pagestyle{empty}

\begin{tikzpicture}[overlay, remember picture]
\node[anchor=south,yshift=8pt] at (current page.south) {\fbox{\parbox{0.99\textwidth}{\footnotesize \textbf{Published in: 2023 European Control Conference (ECC). DOI: 10.23919/ECC57647.2023.10178356.}\\ 
			\textcopyright \space 2023 IEEE. Personal use of this material is permitted. Permission from IEEE must be obtained for all other uses, in any current or future media, including reprinting/republishing this material for advertising or promotional purposes, creating new collective works, for resale or redistribution to servers or lists, or reuse of any copyrighted component of this work in other works.}}};
\end{tikzpicture}

\begin{abstract}

We present an output feedback stochastic model predictive control (SMPC) approach for linear systems subject to Gaussian disturbances and measurement noise and probabilistic constraints on system states and inputs.
The presented approach combines a linear Kalman filter for state estimation with an indirect feedback SMPC, which is initialized with a predicted nominal state, while feedback of the current state estimate enters through the objective of the SMPC problem.
For this combination, we establish recursive feasibility of the SMPC problem due to the chosen initialization, and closed-loop chance constraint satisfaction thanks to an appropriate tightening of the constraints in the SMPC problem also considering the state estimation uncertainty.
Additionally, we show that for specific design choices in the SMPC problem, the unconstrained linear-quadratic-Gaussian (LQG) solution is recovered if it is feasible for a given initial condition and the considered constraints.
We demonstrate this fact for a numerical example, and show that the resulting output feedback controller can provide non-conservative constraint satisfaction.

\end{abstract}

\section{Introduction}\label{sec:introduction}

Model predictive control (MPC) is a well established control approach allowing for guaranteed satisfaction of constraints, which is crucial in safety-critical applications.
Model uncertainties can be addressed via robust or stochastic techniques.
While robust MPC approaches ensure satisfaction of constraints for \emph{all} possible disturbance values, stochastic MPC (SMPC) approaches commonly ensure satisfaction of constraints in probability (see, e.g.,~\cite{kouvaritakis2016model}).
Using an SMPC approach therefore often allows for reduced conservatism by considering the distribution of the uncertainty, including distributions with unbounded support such as Gaussian distributions.
For many MPC approaches, a key assumption is the access to full and exact measurements of the current system state, while often only noisy output measurements are available in practice.
Therefore, an MPC scheme is often used in connection with a state estimator, resulting in an output feedback control approach.
The uncertainty in the state estimates is, however, generally not considered when designing the MPC problem, even though it has a large influence on the prediction accuracy (see, e.g.,~\cite{Sehr2016}) and thus results in a loss of constraint satisfaction guarantees in general.
In this paper, for the case of linear systems subject to probabilistic constraints and Gaussian uncertainties, we investigate the combination of a Kalman filter (KF)~\cite{Kalman1960} for state estimation with an SMPC scheme.
The proposed approach provides guaranteed closed-loop constraint satisfaction by considering the state estimation uncertainty in the controller design.

\subsubsection*{Related Work}
A widely studied problem in the context of output feedback control is linear-quadratic-Gaussian (LQG) control for unconstrained linear systems~\cite{Lindquist2006}.
It is well known that the optimal LQG solution is the combination of a KF with a linear-quadratic regulator (LQR), which can be designed independently due to the separation principle.
Note that in the case of constrained systems, the corresponding optimal control problem can in general only be solved approximatively, e.g., using a combination of nominal MPC with a KF as in~\cite{Sehr2016,Yan2005}.
Robust satisfaction of constraints in an output feedback setting can be achieved by combining tube MPC with a linear observer~\cite{Mayne2006}.
An output feedback SMPC approach with guaranteed satisfaction of probabilistic constraints was introduced in~\cite{Cannon2012} for linear systems subject to bounded disturbances and measurements noise.
An approach for linear systems subject to unbounded uncertainties was introduced in~\cite{Farina2015}, where at each time step, the resulting SMPC problem optimizes over sequences of nominal states and inputs, and linear observer and controller gains.
This results in high online computational complexity, which is partly overcome by introducing two simplifying approximations.
However, the approach neither allows for the use of a standard KF nor to consider the most recent measurement to obtain a state estimate.
Additionally, it is challenging to establish closed-loop chance constraint satisfaction guarantees.
In~\cite{Mark2020}, a distributed output feedback SMPC approach for linear distributed systems was introduced, which relies on tightened deterministic constraints obtained using probabilistic reachable sets (PRS).
Using the same conditions as introduced in~\cite{Hewing2019a} for the state feedback case, the approach can be shown to ensure satisfaction of chance constraints in closed-loop.
However, this requires the use of symmetric PRS which can lead to additional conservatism, e.g., in the case of half-space constraints.

Assuming full state measurements are available, the recently proposed indirect feedback SMPC (IF-SMPC) approach~\cite{Hewing2018} ensures closed-loop chance constraint satisfaction, which is in general difficult, especially in the case of unbounded disturbances (cf. the discussion in~\cite{Kohler2022}).
The IF-SMPC approach relies on a split into a nominal and error system, and a corresponding reformulation of the chance constraints as equivalent tightened deterministic constraints on nominal states.
The IF-SMPC problem is initialized with a predicted nominal state which is independent of the disturbance realizations, and thereby ensures recursive feasibility even in the case of unbounded disturbances.
Feedback enters the MPC optimization problem only through the objective, which depends on predicted true system states.
In case only disturbance samples are available, arguments from scenario optimization can be used to obtain the constraint tightening~\cite{Hewing2019}, which can be easily extended to the case of distributed systems with coupled dynamics~\cite{Muntwiler*2020}.
The performance of IF-SMPC was compared to the one of a more classical direct feedback SMPC approach in~\cite{Hewing2020}, which leads in general to more conservative constraint satisfaction.
It was also shown in~\cite{Hewing2020}, that an appropriate design of IF-SMPC recovers the solution of an infinite horizon linear quadratic regulator (LQR) in case it is feasible.
Ideas from indirect and direct feedback SMPC were recently combined through the introduction of an interpolating initial state constraint~\cite{Kohler2022}.

\subsubsection*{Contribution}
In this work, we propose the combination of a KF with an IF-SMPC for output feedback control of linear systems subject to individual half-space chance constraints as an approximate solution to the corresponding constrained LQG problem (Section~\ref{sec:IF-SMPC_and_KF}).
This results in a simple design, and the benefits of an IF-SMPC approach can be directly transferred to the output feedback setting.
Specifically, we show that the resulting indirect output feedback SMPC (IOF-SMPC) controller is recursively feasible even in the case of unbounded Gaussian uncertainties (Lemma~\ref{lem:recursive_feasiblity}).
An appropriate reformulation of the chance constraints as equivalent tightened deterministic constraints (Section~\ref{sec:tightening}) allows us to prove closed-loop chance constraint satisfaction (Theorem~\ref{thm:constraint_satisfaction}) in a non-conservative manner.
Additionally, the computational complexity of solving the IOF-SMPC problem is comparable to a nominal MPC.
In Section~\ref{sec:LQG}, we prove that under some specific design choice, the presented approach recovers the unconstrained LQG solution, if it is feasible. 
Finally, in Section~\ref{sec:numerical} we evaluate the performance of the presented approach in a numerical example, which demonstrates that the resulting output feedback controller can lead to non-conservative closed-loop constraint satisfaction and recovers the unconstrained LQG solution if feasible.

\section{Problem Formulation}
We consider a linear time-invariant system subject to additive disturbances and measurement noise of the form
\begin{align} \label{eq:sys}
x(k+1) & = A x(k) + Bu(k) + w_x(k), \\
y(k)   & = C x(k) + w_y(k),
\end{align}
with state $x(k)\in\RR^{n_x}$, input $u(k)\in\RR^{n_u}$, and independent and identically distributed (i.i.d.) Gaussian disturbances $w_x(k)\sim \NN(0, \Sigma^{w_x})$ taking values in $\RR^{n_x}$ and with\footnote{We use the notation $P\succ 0$ to denote the matrix $P$ being positive definite.} $\Sigma^{w_x}\succ 0$.
The states can only be estimated through the system output $y(k)\in\RR^{n_y}$, which is corrupted by i.i.d. Gaussian measurement noise $w_y(k)\sim\NN(0, \Sigma^{w_y})$ taking values in $\mathbb{R}^{n_y}$ and with $\Sigma^{w_y}\succ 0$.
The initial state is normally distributed, i.e., $x(0)\sim\mathcal{N}(\mu^{x_0},\Sigma^{x_0})$, with mean $\mu^{x_0}\in\mathbb{R}^{n_x}$ and covariance $\Sigma^{x_0}\succ 0$.
The system is subject to $n_c^x$ half-space chance constraints on the system states and $n_c^u$ half-space chance constraints on the inputs of the form
\begin{subequations}\label{eq:constraints}
	\begin{align} 
	\mathrm{Pr}({h_j^x}^\top x(k) \le 1) \ge& p_j^x,\ \forall j \in \{1,\ldots,n_c^x\} \label{eq:constraint_x}, \\
	\mathrm{Pr}({h_j^u}^\top u(k) \le 1) \ge& p_j^u,\ \forall j \in \{1,\ldots,n_c^u\} \label{eq:constraint_u},
	\end{align}
\end{subequations}
where $h_j^x \in \mathbb{R}^n$ and $h_j^u \in \mathbb{R}^m$, $p_j^x \in (0,1)$ and $p_j^u \in (0,1)$ are the desired satisfaction probabilities, and the probabilities are understood conditioned on the initial state.

We consider the following infinite horizon stochastic optimal control (SOC) problem with quadratic objective

\begin{subequations}\label{eq:SOC}
	\begin{align}
	\min_{\Pi} ~& \lim_{\bar{N}\rightarrow \infty}\frac{1}{\bar{N}} \EE \left( \|x(\bar{N})\|_P^2 + \sum_{k=0}^{\bar{N}-1} \|x(k)\|_Q^2 + \|u(k)\|_R^2 \right) \label{eq:SOC_objective} \\
	\mathrm{s.t.}  ~& \forall~k \in \{0,1,\ldots\}: \nonumber \\
	& x(k+1) = A x(k) + Bu(k) + w_x(k),\\
	& y(k) = C x(k) + w_y(k), \\
	& u(k) = \pi_k(\{y(i)\}_{i=0}^{k},\{u(i)\}_{i=0}^{k-1},\mu^{x_0},\Sigma^{x_0}), \label{eq:SOC_policy}\\
	& \mathrm{Pr}({h_j^x}^\top x(k) \le 1) \ge p_j^x,\ \forall j \in \{1,\ldots,n_c^x\}, \label{eq:SOC_p_x} \\
	& \mathrm{Pr}({h_j^u}^\top u(k) \le 1) \ge p_j^u,\ \forall j \in \{1,\ldots,n_c^u\} \label{eq:SOC_p_u} \\
	& w_x(k) \sim \mathcal{N}(0,\Sigma^{w_x}),~\mathrm{i.i.d.}, \label{eq:SOC_w} \\
	& w_y(k) \sim \mathcal{N}(0,\Sigma^{w_y}),~\mathrm{i.i.d.}, \label{eq:SOC_s} \\
	& x(0) \sim \mathcal{N}(\mu^{x_0},\Sigma^{x_0}),
	\end{align}
\end{subequations}
with $P,~Q,~R\succ 0$, and where $\Pi =\{\pi_0,\pi_{1},\ldots\}$ is a sequence of control laws using all output measurements and inputs available at the current time step, as well as the mean and covariance of the initial state distribution.
Solving~\eqref{eq:SOC} is computationally intractable because of the constraints~\eqref{eq:SOC_p_x}-\eqref{eq:SOC_p_u}, the infinite horizon, and the optimization over general policies $\Pi$.
For the case of linear systems subject to Gaussian disturbance and measurement noise, but without constraints~\eqref{eq:constraints}, the optimal solution to~\eqref{eq:SOC} is the classical LQG controller (see, e.g.,~\cite{bertsekas2011dynamic}).
Due to the well known separation principle (see, e.g.,~\cite{kwakernaak1974linear}), the optimal solution for the unconstrained infinite horizon LQG problem is a combination of a time-invariant KF and a time-invariant LQR, provided the following assumption holds.
\begin{assumption}[Stabilizability and Detectability]\label{ass:detectability}
	The pair $(A,B)$ is stabilizable and the pairs $(A,Q^{\nicefrac{1}{2}})$ and $(A,C)$ are detectable.
\end{assumption}

In the following section, we introduce a receding horizon approach to obtain an approximate solution to the constrained LQG problem, i.e., problem~\eqref{eq:SOC} with constraints~\eqref{eq:SOC_p_x}-\eqref{eq:SOC_p_u}.
\section{Indirect Feedback SMPC and Kalman Filtering} \label{sec:IF-SMPC_and_KF}
In order to approximate the solution to~\eqref{eq:SOC}, we obtain the control input $u(k)$ for system~\eqref{eq:sys} at each time step~$k$ in a two-stage approach: First, a KF~\cite{Kalman1960} is applied to obtain an estimate $\x(k)$ of the current system state using the current measurement $y(k)$ (Section~\ref{sec:state_observer}).
Second, an IF-SMPC~\cite{Hewing2018} initialized with a predicted nominal state is used to compute a control input for the system (Section~\ref{sec:IF-SMPC}).
Thereby, feedback based on the current state estimates enters indirectly through the objective of the IF-SMPC problem.
In Section~\ref{sec:anlysis}, we show that the considered IF-SMPC is recursively feasible (Lemma~\ref{lem:recursive_feasiblity}), and the chance constraints~\eqref{eq:constraints} are satisfied in closed-loop (Theorem~\ref{thm:constraint_satisfaction}), provided the state estimation uncertainty is considered when obtaining deterministic tightened constraints on the nominal system state (Section~\ref{sec:tightening}).
Furthermore, in Section~\ref{sec:LQG} below, we show that the presented two-stage approach recovers the unconstrained LQG solution if it is feasible for~\eqref{eq:SOC} (Theorem~\ref{thm:LQG}), given an appropriate design of the ingredients of the MPC problem.
We refer to our discussion in Section~\ref{sec:LQG} for further details on the relation to LQG.
\subsection{State Estimator and Error Dynamics}\label{sec:state_observer}
The state estimate $\x (k)$ at each time step $k$ is obtained as
\begin{align} \label{eq:state_observer}
	\begin{split}
		\x(k) =& A\x(k-1) + Bu(k-1) \\
		&+ L\left(y(k) - CA\x(k-1) - CBu(k-1)\right),
	\end{split}
\end{align}
initialized with $\x(0)=\mu^{x_0}$, where $L$ is the KF gain
\begin{align}\label{eq:kalman_gain}
	L =& \hat{P}C^\top(\Sigma^s+C\hat{P}C^\top)^{-1}, 
\end{align}
with $\hat{P}\succ 0$ being the positive definite solution to
\begin{align}
	\hat{P} = A(\hat{P}-\hat{P}C^\top(C\hat{P}C^\top+\Sigma^{w_y})^{-1}C\hat{P})A^\top + \Sigma^{w_x}. \label{eq:kalman_riccati}
\end{align}
Note that~\eqref{eq:kalman_gain} and~\eqref{eq:kalman_riccati} are independent of the expected initial system state $\mu^{x_0}$ and covariance matrix $\Sigma^{x_0}$, and $L$ can therefore be computed offline for all $\mu^{x_0}$ and $\Sigma^{x_0}$.

We introduce the estimation error as 
\begin{align}\label{eq:estimation_error}
		\Delta(k) \coloneqq x(k) - \x(k)
\end{align} resulting in the estimation error dynamics
\begin{align}\label{eq:estimation_error_dynamics}
	\Delta(k+1) = (A-LCA)\Delta(k) + (I- LC) w_x(k) - Lw_y(k+1),
\end{align}
initialized with $\Delta(0)=x(0)-\x(0)\sim\mathcal{N}(0,\Sigma^{x_0})$.

Furthermore, we introduce the nominal state $z(k)$ with corresponding error $e(k)$ and estimated error $\e(k)$ as
\begin{align}
	e(k) \coloneqq& x(k)-z(k), \label{eq:error} \\
	\e(k)\coloneqq& \x(k) - z(k) = \e(k) - \Delta(k). \label{eq:estiamted_error} 
\end{align}
Introducing the nominal input $v(k)$ and tube controller $K\e(k)$, results in nominal dynamics, control policy, and error dynamics of the following form
\begin{align}
	z(k+1) =& Az(k) + Bv(k), \label{eq:nominal_dynamics}\\
	u(k) =& v(k) + K\e(k), \label{eq:input} \\
	e(k+1) =& (A+BK)e(k) - BK \Delta(k) + w_x(k), \label{eq:nominal__error_dynamics}
\end{align}
initialized with $z(0)=\mu^{x_0}$ and $e(0)=x(0)-z(0)\sim\mathcal{N}(0,\Sigma^{x_0})$.
Finally, introducing $\xi(k)=\left[e(k)^\top~\Delta(k)^\top\right]^\top$, $\tilde{w}(k)=\left[w_x(k)^\top~w_y(k+1)^\top\right]^\top$, $A_K=(A+BK)$, and $A_L=A-LCA$, we obtain the combined error dynamics
\begin{align}\label{eq:overall_error_dynamics}
\xi (k+1)
=
\underbrace{\begin{bmatrix}
	A_K & -BK \\ 0 & A_L
	\end{bmatrix}}_{\tilde{A}(K)}\xi(k)
+
\underbrace{\begin{bmatrix}
	I & 0 \\ I -LC & -L
	\end{bmatrix}}_{\tilde{B}}\tilde{w}(k).
\end{align}
Note that the combined error dynamics is independent of the true system state and its distribution can thus be computed offline.

From the definition of the nominal error $e(k)$ in~\eqref{eq:error} and estimation error $\Delta(k)$ in~\eqref{eq:estimation_error}, and the initialization of both the state estimator~\eqref{eq:state_observer} and nominal dynamics~\eqref{eq:nominal_dynamics} with $\mu^{x_0}$, we have that $e(0)=\Delta(0)\sim\mathcal{N}(0,\Sigma^{x_0})$.
Together with the distributions of the disturbance $w_x(k)$ and measurement noise $w_y(k)$, we have that
\begin{align*}
\xi (0) \sim \mathcal{N} \left(0,\Sigma^\xi (0) \right), \tilde{w}(k) \sim \mathcal{N} \left(0,\Sigma^{\tilde{w}}\right),
\end{align*}
with
\begin{align*}
\Sigma^\xi (0) = \begin{bmatrix}
1 & 1 \\
1 & 1
\end{bmatrix} \otimes \Sigma^{x_0} \text{ and } \Sigma^{\tilde{w}} = \begin{bmatrix}
\Sigma^{w_x} & 0 \\
0 & \Sigma^{w_y}
\end{bmatrix},
\end{align*}
where $\otimes$ denotes the Kronecker product.
Consequently, we have that $\xi (k) \sim \mathcal{N}(0,\Sigma^\xi (k))$ with
\begin{align}\label{eq:overall_error_covariance}
\Sigma^\xi (k+1) = \tilde{A}(K) \Sigma^\xi(k) \tilde{A}(K)^\top + \tilde{B}\Sigma^{\tilde{w}}\tilde{B}^\top,
\end{align}
and $\tilde{A}(K)$ and $\tilde{B}$ as defined in~\eqref{eq:overall_error_dynamics}.
\subsection{Constraint Tightening} \label{sec:tightening}
In the following, we show how the combined error dynamics~\eqref{eq:overall_error_dynamics} and the corresponding covariance dynamics~\eqref{eq:overall_error_covariance} can be used to design deterministic constraints on the nominal state and input, which allow for satisfaction of the underlying chance constraints~\eqref{eq:constraints} for the closed-loop system when applying the receding horizon controller.

Using the definitions of $e(k)$ and $u(k)$ in~\eqref{eq:error} and~\eqref{eq:input}, respectively, we can rewrite the chance constraints as
\begin{subequations}\label{eq:constraints_z+e}
	\begin{align}
	\mathrm{Pr}({h_j^x}^\top z(k) + {h_j^x}^\top e(k) \le 1) \ge& p_j^x,\ \forall j \in \{1,\ldots,n_c^x\}, \\
	\mathrm{Pr}({h_j^u}^\top v(k) + {h_j^u}^\top K\e(k) \le 1) \ge& p_j^u,\ \forall j \in \{1,\ldots,n_c^u\}.
	\end{align}
\end{subequations}
In order to transform above chance constraints into deterministic constraints on the nominal state and input, we use the following concept of probabilistic reachable sets for the error state $e(k)$ and tube controller $K\e(k)$.
\begin{definition}[\protect {$k$-step PRS~\cite[Def. 1]{Hewing2018}}]\label{def:prs}
	Consider system
	\begin{align}\label{eq:sys_zeta}
		\zeta(k+1) = A_{\zeta}\zeta(k) + B_{\zeta}w_{\zeta}(k),
	\end{align}
	with state $\zeta(k)\in\mathbb{R}^{n_\zeta}$ initialized with $\zeta(0)\sim\mathcal{Q}_{\zeta}$, and disturbance $w_\zeta(k)\in\mathbb{R}^{m_\zeta}$ with $w_\zeta(k)\sim\mathcal{Q}_{w_\zeta}$, where $\mathcal{Q}_{\zeta}$ and $\mathcal{Q}_{w_\zeta}$ are probability distributions.
	A set $\mathcal{R}_k$ with $k \in \{0,1,\ldots\}$ is a $k$-step probabilistic reachable set ($k$-step PRS) of probability level $p$ for system~\eqref{eq:sys_zeta}  if
	\begin{align*}
	\mathrm{Pr}(\zeta(k)\in\mathcal{R}_k|\zeta(0))\ge p.
	\end{align*}\vspace{-5mm}
\end{definition}

From the overall error dynamics~\eqref{eq:overall_error_dynamics} and covariance dynamics~\eqref{eq:overall_error_covariance} we can obtain the distribution of the error state and tube controller as $e(k)\sim\mathcal{N}(0,\Sigma^{\mathrm{e}}(k))$ and $K\e(k) = K(e(k)-\Delta(k))\sim\mathcal{N}(0,\Sigma^{\mathrm{tb}}(k))$, respectively, with
\begin{subequations}\label{eq:error_covariance}
	\begin{align}
	\Sigma^{\mathrm{e}}(k) =& \begin{bmatrix}
	I & 0
	\end{bmatrix} \Sigma^\xi(k) \begin{bmatrix}
	I & 0
	\end{bmatrix}^\top, \\
	\Sigma^{\mathrm{tb}}(k) =& K\begin{bmatrix}
	I & -I
	\end{bmatrix} \Sigma^\xi(k) \begin{bmatrix}
	I & -I
	\end{bmatrix}^\top K^\top,
	\end{align}
\end{subequations}
as similarly done in~\cite[Sec. 2.2]{Farina2015}.

Using $k$-step half-space PRS for both the error state $e(k)$ and tube controller $K\e(k)$, i.e., a $k$-step PRS according to Definition~\ref{def:prs} of the form $\mathcal{R}_k = \{x|h_k^\top x \le 1\}$, we can transform~\eqref{eq:constraints_z+e} into tightened constraints for all $k \in \{0,1,\ldots\}$~\cite[Sec. 3.2]{Hewing2018}
\begin{subequations}\label{eq:constraints_nominal}
	\begin{align}
	{h_j^x}^\top z(k) \le& 1 - c_{j,k}^x,\ \forall j \in \{1,\ldots,n_c^x\}, \\
	{h_j^u}^\top v(k) \le& 1 - c_{j,k}^u,\ \forall j \in \{1,\ldots,n_c^u\},
	\end{align}
\end{subequations}
where the tightening values $c_{j,k}^x$ and $c_{j,k}^u$ are obtained as
\begin{subequations}\label{eq:tightenings}
	\begin{align}
		c_{j,k}^x =& \sqrt{\tilde{p}_j^x{h_j^x}^\top \Sigma^{\mathrm{e}}(k) h_j^x},\ \forall j \in \{1,\ldots,n_c^x\}, \\
		c_{j,k}^u =& \sqrt{\tilde{p}_j^u{h_j^u}^\top \Sigma^{\mathrm{tb}}(k) h_j^u},\ \forall j \in \{1,\ldots,n_c^u\},
	\end{align}
\end{subequations}
with $\tilde{p}_j^x=\chi_1^2(2p_j^x-1)$ and $\tilde{p}_j^u=\chi_1^2(2p_j^u-1)$, where $\chi_1^2$ is the quantile function of the chi-squared distribution with~$1$ degree of freedom.
Note that the tightening of the half-space constraints~\eqref{eq:constraints} using half-space PRS in~\eqref{eq:constraints_nominal} is non-conservative in a sense that whenever the nominal constraint~\eqref{eq:constraints_nominal} is always active, the original probabilistic constraint is satisfied at the desired probability level (compare~\cite{Lorenzen2017},~\cite[Rem. 3]{Hewing2018}).

\subsection{MPC Formulation}\label{sec:IF-SMPC}
After introducing the nominal system~\eqref{eq:nominal_dynamics} and corresponding deterministic constraints~\eqref{eq:constraints_nominal}, we now proceed to the formulation of the IOF-SMPC scheme.
As in IF-SMPC~\cite{Hewing2018}, the SMPC problem is initialized with a predicted nominal state.
Constraints are enforced on the nominal state only, and thus recursive feasibility can be easily established.
Feedback of the current system state, here the estimate $\x(k)$ of the current system state, is introduced through the objective of the SMPC problem, and thus termed indirect.
For a quadratic objective~\eqref{eq:SOC_objective} and under the considered controller class~\eqref{eq:input}, minimization of the expectation in~\eqref{eq:SOC_objective} is equivalent to minimizing the cost at the mean~\cite{Hewing2020}.
At each time step $k$, the following optimization problem is solved, initialized with the current state estimate $\hat{x}(k)$ obtained from~\eqref{eq:state_observer} and a predicted nominal state $z_1(k-1)$:
\begin{subequations}\label{eq:MPC}
	\begin{align}
	\min_{\{v_{i|k}\}_{i=0}^{N-1}, \{z_{i|k}\}_{i=0}^{N}} ~& \| \bar{x}_{N|k} \|_P^2 + \sum_{i=0}^{N-1} \|\bar{x}_{i|k}\|_Q^2 + \|v_{i|k}+K\bar{e}_{i|k}\|_R^2 \label{eq:MPC_objective} \\
	\mathrm{s.t.}  ~& \forall~i \in \{0,\ldots,N-1\}: \nonumber \\
	& \bar{x}_{i+1|k} = z_{i+1|k} + \bar{e}_{i+1|k}, \label{eq:MPC_mean_state} \\ 
	& z_{i+1|k} = Az_{i|k} + Bv_{i|k}, \\
	& \bar{e}_{i+1|k} = (A+BK)\bar{e}_{i|k}, \label{eq:MPC_error_dynamics} \\
	& {h_j^x}^\top z_{i|k} \le 1 - c_{j,k+i}^x,\ \forall j \in \{1,\ldots,n_c^x\}, \label{eq:MPC_constraint_z} \\
	& {h_j^u}^\top v_{i|k} \le 1 - c_{j,k+i}^u,\ \forall j \in \{1,\ldots,n_c^u\}, \label{eq:MPC_constraint_v}\\
	& z_{N|k} \in \ZZ_f, \label{eq:MPC_terminal_constr}\\
	& \bar{x}_{0|k} = \x(k), ~z_{0|k}=z_1(k-1), \label{eq:MPC_initial_state}\\
	& \bar{e}_{0|k}=\bar{x}_{0|k} - z_{0|k}, \label{eq:MPC_initial_error}
	\end{align}
\end{subequations}
where $\bar{x}_{i|k}$ denotes the mean of the predicted state of system~\eqref{eq:sys} at time step $i$ computed at time step $k$ and initialized with the current state estimate, i.e., $\bar{x}_{0|k}=\x(k)$, and $\bar{e}_{i|k}$ denotes the corresponding mean error.
Note that different from~\eqref{eq:nominal__error_dynamics}, the mean error prediction in~\eqref{eq:MPC_error_dynamics} does not contain predicted estimation errors, i.e., terms involving $\Delta_{i|k}$, because the estimation error dynamics~\eqref{eq:estimation_error_dynamics} is subject to zero mean disturbances and initialized with $\Delta_{0|k}=0$ at each time step~$k$.
The resulting control input to system~\eqref{eq:sys} is then defined using the optimizer of~\eqref{eq:MPC} as
\begin{align}\label{eq:MPC_input}
	u(k) = \pi_{\mathrm{MPC}}(\x(k),z(k)) = v_{0|k}^* + K(\x (k) - z(k)),
\end{align}
where $z(k)=z_1(k-1)$.
The predicted nominal state is obtained as
\begin{align}\label{eq:predicted_nominal_state}
	z_1(k) = z_{1|k}^*,
\end{align}
initialized with $z_1(-1)=\x(0)$.
The terminal set $\ZZ_f$ in~\eqref{eq:MPC_terminal_constr} is designed as follows.
\begin{assumption}[\protect {Terminal Invariance~\cite[Ass. 1]{Hewing2018}}]\label{ass:terminal_invariance}
	The terminal set $\ZZ_f$ is positively invariant for system~\eqref{eq:nominal_dynamics} under a terminal control law $\pi_f(z)$, i.e., for all $z\in\ZZ_f$ we have $Az+B\pi_f(z)\in\ZZ_f$.
	Additionally, for all $z\in\ZZ_f$ it holds that
	\begin{align*}
		{h_j^x}^\top z \le& 1 - \sqrt{\tilde{p}_j^x{h_j^x}^\top \Sigma^{\mathrm{e}}_{\infty} h_j^x},\ \forall j \in \{1,\ldots,n_c^x\}, \\
		{h_j^u}^\top \pi_f(z) \le& 1 - \sqrt{\tilde{p}_j^u{h_j^u}^\top \Sigma^{\mathrm{tb}}_{\infty} h_j^u},\ \forall j \in \{1,\ldots,n_c^u\},
	\end{align*}
	where the tightening is obtained using a probabilistic reachable set (PRS)~\cite[Def. 2]{Hewing2018}, i.e., a set $\mathcal{R}_k$ according to Definition~\ref{def:prs} with $k\rightarrow\infty$, with covariance matrices
	\begin{subequations}\label{eq:error_covariance_infty}
		\begin{align}
		\Sigma^{\mathrm{e}}_\infty =& \begin{bmatrix}
		I & 0
		\end{bmatrix} \Sigma^\xi_\infty \begin{bmatrix}
		I & 0
		\end{bmatrix}^\top, \\
		\Sigma^{\mathrm{tb}}_\infty =& K\begin{bmatrix}
		I & -I
		\end{bmatrix} \Sigma^\xi_\infty \begin{bmatrix}
		I & -I
		\end{bmatrix}^\top K^\top,
		\end{align}
	\end{subequations}
	and $\Sigma_\infty^\xi$ is obtained as the solution to the Lyapunov equation 
	\begin{align*}
		\Sigma_\infty^\xi = \tilde{A}(K)\Sigma_\infty^\xi \tilde{A}(K)^\top + \tilde{B}\Sigma^{\tilde{w}}\tilde{B}^\top.
	\end{align*}\vspace{-5mm}
\end{assumption}
\begin{remark}[Finite-horizon Problems]
	In practice, in case a finite-horizon control task (i.e., the SOC problem~\eqref{eq:SOC} with potentially long but finite horizon $\bar{N}$) is considered, e.g., when the objective is to control a system from one steady-state to another one within finite time steps, a time-varying KF with gain $L(k)$ can be considered instead of~\eqref{eq:state_observer}.
	This allows for an improved estimation performance, while recursive feasibility and constraint satisfaction as discussed in Section~\ref{sec:anlysis} below can still be established by appropriately adapting the tightening in~\eqref{eq:MPC_constraint_z}-\eqref{eq:MPC_constraint_v}. 
\end{remark}
\subsection{Recursive Feasibility and Constraint Satisfaction}\label{sec:anlysis}
In this section, we establish recursive feasibility and closed-loop chance constraint satisfaction for the presented indirect output feedback SMPC approach~\eqref{eq:MPC}.
As common in IF-SMPC, recursive feasibility follows directly from standard MPC results given a suitable terminal set (Assumption~\ref{ass:terminal_invariance}) due to the initialization of the nominal state using the previously predicted nominal state~\eqref{eq:predicted_nominal_state} and the constraints only acting on the nominal state and input (compare \cite[Thm.~1]{Hewing2018}).
\begin{lemma}[Recursive Feasibility]\label{lem:recursive_feasiblity}
	Consider system~\eqref{eq:sys} under the control law~\eqref{eq:MPC_input} with state estimates obtained from~\eqref{eq:state_observer}, and let Assumptions~\ref{ass:terminal_invariance} hold.
	If the optimization problem~\eqref{eq:MPC} is feasible at the initial time step for $\x(0) = \mu^x$ and $z_1(-1) = \mu^x$, then it is recursively feasible, i.e., it is feasible for all time steps $k\in\{0,1,\ldots\}$.
\end{lemma}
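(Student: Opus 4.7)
The plan is the standard receding-horizon induction with a shift-and-append candidate. The base case at $k = 0$ is the hypothesis, so I focus on the inductive step. Assuming \eqref{eq:MPC} is feasible at time $k$ with optimizers $\{v^*_{i|k}\}_{i=0}^{N-1}$ and $\{z^*_{i|k}\}_{i=0}^{N}$, I would propose the candidate at time $k+1$
\begin{align*}
z_{i|k+1} &= z^*_{i+1|k}, \quad i=0,\ldots,N-1, \\
z_{N|k+1} &= Az^*_{N|k} + B\pi_f(z^*_{N|k}), \\
v_{i|k+1} &= v^*_{i+1|k}, \quad i=0,\ldots,N-2, \\
v_{N-1|k+1} &= \pi_f(z^*_{N|k}),
\end{align*}
and verify all constraints of \eqref{eq:MPC} at time $k+1$ in turn.

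The initialization \eqref{eq:MPC_initial_state} is immediate: $z_{0|k+1} = z^*_{1|k} = z_1(k)$ by the definition of the predicted nominal state in \eqref{eq:predicted_nominal_state}. The nominal dynamics are inherited from the optimal trajectory at time $k$ for the shifted indices and hold by construction for the appended step. For the tightened state and input constraints \eqref{eq:MPC_constraint_z}--\eqref{eq:MPC_constraint_v} at predicted indices $i \in \{0,\ldots,N-2\}$, the key observation is that the tightening at time $k+1$ and predicted step $i$ is $c^{x,u}_{j,k+1+i} = c^{x,u}_{j,k+(i+1)}$, which is exactly the tightening already satisfied by $z^*_{i+1|k}$ and $v^*_{i+1|k}$ under feasibility at time $k$. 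At the terminal step, positive invariance of $\ZZ_f$ under $\pi_f$ (Assumption~\ref{ass:terminal_invariance}) yields $z_{N|k+1} \in \ZZ_f$, so \eqref{eq:MPC_terminal_constr} holds; moreover, the explicit tightened state and input inequalities in Assumption~\ref{ass:terminal_invariance} directly cover $z_{N-1|k+1} = z^*_{N|k} \in \ZZ_f$ and $v_{N-1|k+1} = \pi_f(z^*_{N|k})$ at the pre-terminal index.

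I expect the main subtlety to be matching the tightening constants at the pre-terminal stage: the MPC constraints at $i = N-1$ use the finite-time covariances $\Sigma^{\mathrm{e}}(k+N-1)$ and $\Sigma^{\mathrm{tb}}(k+N-1)$, while Assumption~\ref{ass:terminal_invariance} only supplies bounds in terms of the stationary covariances $\Sigma^{\mathrm{e}}_\infty$ and $\Sigma^{\mathrm{tb}}_\infty$. Closing this gap requires the monotonicity $\Sigma^\xi(k) \preceq \Sigma^\xi_\infty$ for all $k$, which follows from Schur-stability of $\tilde{A}(K)$ together with $\Sigma^\xi(0) \preceq \Sigma^\xi_\infty$ applied to the Lyapunov recursion \eqref{eq:overall_error_covariance}; once this monotonicity is in place, the remainder of the argument is a direct specialization of \cite[Thm.~1]{Hewing2018}, and recursive feasibility follows by induction.
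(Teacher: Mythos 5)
Your construction is the same standard shift-and-append argument that the paper itself relies on --- the paper gives no explicit proof of Lemma~\ref{lem:recursive_feasiblity} and simply defers to \cite[Thm.~1]{Hewing2018} --- and the index bookkeeping for the time-varying tightening at stages $i\in\{0,\dots,N-2\}$ is exactly right: feasibility of $z^*_{i+1|k}$ against $c^{x}_{j,k+i+1}$ is precisely what the shifted candidate needs at time $k+1$, stage $i$. The initialization $z_{0|k+1}=z_1(k)=z^*_{1|k}$ and the invariance of $\ZZ_f$ under $\pi_f$ close the terminal step, so the overall route matches the paper's.

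The one loose end is the junction you correctly isolate. Your fix needs $\Sigma^\xi(0)\preceq\Sigma^\xi_\infty$, and you state this as if it were automatic; it is not. $\Sigma^\xi(0)$ is built from $\Sigma^{x_0}$, which is a free problem datum, while $\Sigma^\xi_\infty$ is fixed by the Lyapunov equation; for large enough $\Sigma^{x_0}$ the ordering fails, and then the recursion \eqref{eq:overall_error_covariance} converges to $\Sigma^\xi_\infty$ from above rather than from below (in the scalar analogue $\sigma(k+1)=a^2\sigma(k)+q$ with $\sigma(0)>\sigma_\infty$ one has $\sigma(k)>\sigma_\infty$ for every finite $k$), so $c^x_{j,k+N}$ can strictly exceed the terminal tightening $\sqrt{\tilde{p}_j^x\,{h_j^x}^\top\Sigma^{\mathrm{e}}_\infty h_j^x}$ and the appended candidate $z^*_{N|k}\in\ZZ_f$ would violate \eqref{eq:MPC_constraint_z} at $i=N-1$. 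To be fair, the paper is silent on this point as well: Assumption~\ref{ass:terminal_invariance} writes the terminal tightening with the stationary covariances, and this yields a set that is a PRS uniformly in $k$ (which is what \cite[Def.~2]{Hewing2018} actually requires, and what makes the junction work by definition) only when ${h_j^x}^\top\Sigma^{\mathrm{e}}(k)h_j^x\le {h_j^x}^\top\Sigma^{\mathrm{e}}_\infty h_j^x$ holds for all $k$. So your strategy is the right one, but to make the argument airtight you should either add $\Sigma^\xi(0)\preceq\Sigma^\xi_\infty$ (equivalently, a bound on $\Sigma^{x_0}$) as an explicit hypothesis, or invoke Assumption~\ref{ass:terminal_invariance} in its uniform-in-time PRS reading and note that the stationary-covariance formulas stated there presuppose exactly this ordering.
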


Recursive feasibility of~\eqref{eq:MPC} directly implies satisfaction of the constraints~\eqref{eq:constraints} in closed-loop, due to the constraint tightening in~\eqref{eq:constraints_nominal} based on $k$-step PRS according to Definition~\ref{def:prs}.
\begin{theorem}[Closed-loop Constraint Satisfaction]\label{thm:constraint_satisfaction}
	Consider system~\eqref{eq:sys} under the control law~\eqref{eq:MPC_input} with state estimate obtained from~\eqref{eq:state_observer}, and let Assumption~\ref{ass:terminal_invariance} hold. The resulting closed-loop states $x(k)$ and inputs $u(k)$ satisfy the chance constraints~\eqref{eq:constraints} for all $k\in\{0,1,\ldots\}$.
\end{theorem}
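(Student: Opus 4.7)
The plan is to combine Lemma~\ref{lem:recursive_feasiblity} with the Gaussian marginal laws of $e(k)$ and $K\e(k)$ to reduce the chance constraints to scalar Gaussian tail bounds. First, I would invoke recursive feasibility from Lemma~\ref{lem:recursive_feasiblity} to conclude that at every $k$ the $i=0$ instances of~\eqref{eq:MPC_constraint_z}--\eqref{eq:MPC_constraint_v} hold pathwise, namely ${h_j^x}^\top z(k)\le 1-c_{j,k}^x$ and ${h_j^u}^\top v(k)\le 1-c_{j,k}^u$, where $z(k)=z_1(k-1)=z_{0|k}$ and $v(k)=v_{0|k}^{*}$ are the MPC-produced nominal quantities. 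These inequalities hold almost surely because they are imposed as hard constraints in~\eqref{eq:MPC} for every realization of the measurements.

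Next, using the decompositions $x(k)=z(k)+e(k)$ and $u(k)=v(k)+K\e(k)$ from~\eqref{eq:error}--\eqref{eq:input}, together with the marginal laws $e(k)\sim\mathcal{N}(0,\Sigma^{\mathrm{e}}(k))$ and $K\e(k)\sim\mathcal{N}(0,\Sigma^{\mathrm{tb}}(k))$ derived from~\eqref{eq:overall_error_dynamics}--\eqref{eq:error_covariance}, I would derive from the pathwise bound on $z(k)$ the almost-sure set inclusion
\begin{align*}
\{{h_j^x}^\top e(k)\le c_{j,k}^x\}\subseteq\{{h_j^x}^\top x(k)\le 1\},
\end{align*}
which yields $\mathrm{Pr}({h_j^x}^\top x(k)\le 1)\ge\mathrm{Pr}({h_j^x}^\top e(k)\le c_{j,k}^x)$; an analogous inclusion handles the input with $K\e(k)$ in place of $e(k)$. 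This step is what makes the argument sidestep any joint-distribution subtleties between the random $z(k)$ and $e(k)$: the nominal bound is deterministic per realization, so only the marginal of $e(k)$ is needed.

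Third, I would evaluate the right-hand side as a scalar Gaussian tail. Since ${h_j^x}^\top e(k)\sim\mathcal{N}(0,\sigma_j^2)$ with $\sigma_j^2={h_j^x}^\top\Sigma^{\mathrm{e}}(k)h_j^x$ and $c_{j,k}^x=\sqrt{\chi_1^2(2p_j^x-1)}\,\sigma_j$, the identity $\sqrt{\chi_1^2(2p-1)}=\Phi^{-1}(p)$ (which follows from $\mathrm{Pr}(Z^2\le z^2)=2\Phi(z)-1$ for $Z\sim\mathcal{N}(0,1)$) gives $\mathrm{Pr}({h_j^x}^\top e(k)\le c_{j,k}^x)=\Phi(\Phi^{-1}(p_j^x))=p_j^x$, establishing~\eqref{eq:constraint_x}. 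The input case is identical using $\Sigma^{\mathrm{tb}}(k)$ and the quantile $\Phi^{-1}(p_j^u)$, establishing~\eqref{eq:constraint_u}.

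The main obstacle I expect, and indeed the crux of indirect-feedback SMPC, is ensuring that the tightened nominal constraint holds pathwise rather than only in expectation; this is precisely what Lemma~\ref{lem:recursive_feasiblity} delivers through the initialization $z_{0|k}=z_1(k-1)$ together with the hard constraints in~\eqref{eq:MPC}. A secondary consistency check is that the covariances $\Sigma^{\mathrm{e}}(k),\Sigma^{\mathrm{tb}}(k)$ entering~\eqref{eq:tightenings} are the correct unconditional closed-loop error covariances at time $k$, which holds because~\eqref{eq:overall_error_dynamics} is autonomous in $(e,\Delta)$ driven by $\tilde w$ and the initial condition $\xi(0)$, both independent of the applied input trajectory, so that the covariance recursion~\eqref{eq:overall_error_covariance} indeed governs the marginal distribution of $\xi(k)$ in closed loop.
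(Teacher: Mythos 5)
Your proposal is correct and follows essentially the same route as the paper's proof: recursive feasibility gives the pathwise tightened bound on $z(k)$ and $v(k)$, the decomposition $x(k)=z(k)+e(k)$ (resp. $u(k)=v(k)+K\e(k)$) reduces the chance constraint to the marginal Gaussian tail of the autonomous error, and the tightening~\eqref{eq:tightenings} is designed exactly so that this tail probability equals the required level. The only difference is that you spell out the quantile identity $\sqrt{\chi_1^2(2p-1)}=\Phi^{-1}(p)$ and the pathwise set-inclusion argument, which the paper leaves implicit in the phrase ``by design of the constraint tightening.''
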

\begin{proof}
	The proof follows similar arguments as the proof of~\cite[Thm. 2]{Hewing2018}.
	Due to the linear evolution of the combined error system~\eqref{eq:overall_error_dynamics} and by design of the constraint tightening in~\eqref{eq:tightenings} we have that $\mathrm{Pr}({h_j^x}^\top e(k) \le c_{j,k}^x) \ge p_j^x$ for all $k\in\{0,1,\ldots\}$ and $j\in\{1,\ldots,n_c^x\}$.
	From recursive feasibility (Lemma~\ref{lem:recursive_feasiblity}) and the definition of the nominal constraint~\eqref{eq:MPC_constraint_z} enforced in~\eqref{eq:MPC} it follows that $z(k)=z_{0|k}^*=z_1(k-1)$ with ${h_j^x}^\top z_1(k-1) \le 1 - c_{j,k}^x$ for all $j\in\{1,\ldots,n_c^x\}$.
	Using the definition of the nominal error, we have that $x(k) = z(k) + e(k)$ and it directly follows that the constraint~\eqref{eq:constraint_x} is satisfied for all $k\in\{0,1,\ldots\}$.
	The same arguments hold for the input constraints~\eqref{eq:constraint_u}.
\end{proof}

Note that in contrast to~\cite[Thm. 2]{Hewing2018}, Theorem~\ref{thm:constraint_satisfaction} ensures closed-loop constraint satisfaction in the absence of exact state measurements, i.e., under additional uncertainties in the corresponding state estimates.

\section{Relation to Unconstrained LQG}\label{sec:LQG}

In this section, we establish the relation of the presented combination of a KF~\eqref{eq:state_observer} and an IOF-SMPC~\eqref{eq:MPC}, with classical unconstrained LQG, under a special choice of the terminal cost weighting $P$ and the tube controller $K$, and without terminal constraint~\eqref{eq:MPC_terminal_constr}. 

\begin{assumption}[Terminal Cost Weighting]\label{ass:terminal_cost}
	The weighting $P$ of the terminal cost in~\eqref{eq:SOC} and~\eqref{eq:MPC} satisfies the Riccati equation
	\begin{align}\label{eq:LQR_riccati}
		P = A^\top PA - A^\top PB(R + B^\top P B)^{-1}B^\top P A + Q.
	\end{align}
	Additionally, the tube controller $K$ in~\eqref{eq:MPC} and~\eqref{eq:MPC_input} is chosen as the LQR controller $K_{\mathrm{LQR}}$ with
	\begin{align}\label{eq:LQR_gain}
	K_{\mathrm{LQR}} = (R+B^\top P B)^{-1}B^\top P A.
	\end{align}%
\end{assumption}

In the unconstrained case, i.e., without constraints~\eqref{eq:SOC_p_x}-\eqref{eq:SOC_p_u}, the optimal solution to~\eqref{eq:SOC} is given by $\pi_{\mathrm{LQR}}(\x(k)) = K_{\mathrm{LQR}} \x (k)$ with $K_{\mathrm{LQR}}$ from~\eqref{eq:LQR_gain} and $\x (k)$ as the KF estimate resulting from~\eqref{eq:state_observer}.

In the following theorem, we establish that the proposed combination of a KF~\eqref{eq:state_observer} with an IOF-SMPC~\eqref{eq:MPC} without terminal constraint~\eqref{eq:MPC_terminal_constr} recovers the optimal solution of the unconstrained LQG problem, i.e., the SOC problem~\eqref{eq:SOC} without constraints~\eqref{eq:SOC_p_x}-\eqref{eq:SOC_p_u}, if it is feasible.

\begin{theorem}\label{thm:LQG}
	Consider the IOF-SMPC problem~\eqref{eq:MPC} without terminal constraint~\eqref{eq:MPC_terminal_constr} and
	let Assumptions~\ref{ass:detectability} and~\ref{ass:terminal_cost} hold.
	Furthermore, let $\pi_k(\{y(i)\}_{i=0}^{k},\{u(i)\}_{i=0}^{k-1},\mu^{x_0},\Sigma^{x_0}) = K_{\mathrm{LQR}}\x (k)$, with $\x (k)$ resulting from~\eqref{eq:state_observer} and $L$ equal to the KF gain as obtained in~\eqref{eq:kalman_gain}, be the unique feasible and optimal solution to the SOC problem~\eqref{eq:SOC}.
	Then, $\pi_{MPC}(\x(k),z(k)) = K_{\mathrm{LQR}}\x(k)$ for all $k\ge 0$, i.e., the optimal solution to~\eqref{eq:SOC} is recovered by~\eqref{eq:MPC}-\eqref{eq:MPC_input}.
\end{theorem}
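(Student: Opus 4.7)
The plan is to reduce the MPC problem~\eqref{eq:MPC} without the terminal constraint~\eqref{eq:MPC_terminal_constr} to a finite-horizon LQR problem and then show by induction on $k$ that its unconstrained minimizer is always feasible for the tightened constraints~\eqref{eq:MPC_constraint_z}-\eqref{eq:MPC_constraint_v}, and hence is also the constrained optimum. Since $\bar{x}_{i+1|k} = z_{i+1|k} + \bar{e}_{i+1|k} = A\bar{x}_{i|k} + B(v_{i|k}+K\bar{e}_{i|k}) = A\bar{x}_{i|k} + Bu_{i|k}$ with $\bar{x}_{0|k} = \hat{x}(k)$, the objective~\eqref{eq:MPC_objective} is exactly the standard finite-horizon LQR cost in the variables $(\bar{x}_{i|k}, u_{i|k})$. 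Moreover, once $z_{0|k}$ is fixed, the map $\{v_{i|k}\}\leftrightarrow\{u_{i|k}\}$ is bijective (via $\bar{e}_{i|k}$, which is determined by $z_{0|k}$ and $\hat{x}(k)$), so optimizing~\eqref{eq:MPC} over $\{v_{i|k}, z_{i|k}\}$ is equivalent to minimizing the LQR cost over $\{u_{i|k}\}$.

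Under Assumption~\ref{ass:terminal_cost}, $P$ solves the Riccati equation~\eqref{eq:LQR_riccati}, so standard dynamic programming on quadratic value functions yields a unique unconstrained minimizer $u_{i|k}^\star = K_{\mathrm{LQR}}\bar{x}_{i|k}^\star$ with optimal value $\|\hat{x}(k)\|_P^2$. With $K = K_{\mathrm{LQR}}$, translating back to the MPC variables gives $v_{i|k}^\star = K_{\mathrm{LQR}}z_{i|k}^\star$ and the closed-form nominal trajectory $z_{i|k}^\star = (A+BK_{\mathrm{LQR}})^i z(k)$. It then remains to verify that this LQR candidate satisfies the tightened constraints in~\eqref{eq:MPC}, which by strict convexity of the MPC objective identifies it as the unique MPC optimum.

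Feasibility is established by induction on $k$. At $k=0$, the initialization $z(0) = \mu^{x_0}$ gives $z_{i|0}^\star = (A+BK_{\mathrm{LQR}})^i \mu^{x_0}$. On the other hand, the closed-loop LQG recursion $x(k+1) = Ax(k) + BK_{\mathrm{LQR}}\hat{x}(k) + w_x(k)$ combined with the zero-mean property of $\Delta(k)$ gives $\mathbb{E}[x(k)] = (A+BK_{\mathrm{LQR}})^k \mu^{x_0}$, and the deviation $x(k) - \mathbb{E}[x(k)]$ obeys the $e$-block of the combined error system~\eqref{eq:overall_error_dynamics} with the same gain $K = K_{\mathrm{LQR}}$, hence has covariance $\Sigma^{\mathrm{e}}(k)$. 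Since the tightenings~\eqref{eq:tightenings} are the exact $\chi_1^2$-quantiles for Gaussian half-spaces (the non-conservative character noted after~\eqref{eq:tightenings}), the deterministic inequality ${h_j^x}^\top z_{i|0}^\star \le 1 - c_{j,i}^x$ is equivalent to the chance constraint~\eqref{eq:constraint_x} on $x(i)$ under the LQG closed loop, which holds by the hypothesis that LQG is feasible for~\eqref{eq:SOC}; the input constraints are handled analogously through $\Sigma^{\mathrm{tb}}(k)$. The inductive step is similar: the update~\eqref{eq:predicted_nominal_state} propagates $z(k+1) = Az(k) + BK_{\mathrm{LQR}}z(k) = (A+BK_{\mathrm{LQR}})z(k)$, so the identification $z_{i|k}^\star = \mathbb{E}[x(k+i)]$ persists, the tightened constraints remain satisfied, and strict convexity forces $v_{0|k}^\star = K_{\mathrm{LQR}}z(k)$. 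Substitution into~\eqref{eq:MPC_input} then yields $\pi_{\mathrm{MPC}}(\hat{x}(k),z(k)) = K_{\mathrm{LQR}}z(k) + K_{\mathrm{LQR}}(\hat{x}(k) - z(k)) = K_{\mathrm{LQR}}\hat{x}(k)$.

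I expect the main obstacle to be the clean identification of the MPC nominal trajectory $z_{i|k}^\star$ with the closed-loop LQG mean $\mathbb{E}[x(k+i)]$, together with the identification of the MPC error covariance $\Sigma^{\mathrm{e}}(k)$ with the closed-loop covariance of $x(k) - \mathbb{E}[x(k)]$ under LQG. These two identifications, combined with the non-conservative character of the half-space PRS tightening, are precisely what convert the hypothesis ``LQG is feasible for~\eqref{eq:SOC}'' into ``the LQR candidate is feasible for~\eqref{eq:MPC}'' at every time step and thereby close the recovery argument.
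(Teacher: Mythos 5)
Your proposal is correct and follows essentially the same route as the paper's proof: identify the unconstrained minimizer of~\eqref{eq:MPC} under Assumption~\ref{ass:terminal_cost} as $u_{i|k}^* = K_{\mathrm{LQR}}\bar{x}_{i|k}^*$, and then establish its feasibility by matching the predicted nominal trajectory $z_{i|k}^*$ with the closed-loop LQG evolution and invoking the exactness (non-conservativeness) of the half-space PRS tightening together with the assumed feasibility of the LQG solution for~\eqref{eq:SOC}. Your version only spells out more explicitly two steps the paper leaves implicit, namely the bijection between $\{v_{i|k}\}$ and $\{u_{i|k}\}$ and the identification of $z(k)$ with $\mathbb{E}[x(k)]$ and of the deviation covariance with $\Sigma^{\mathrm{e}}(k)$.
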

\begin{proof}
	The proof follows similar arguments as the proof of~\cite[Lem. 1]{Hewing2020}.
	The claim is proven by showing that
	\begin{align}\label{eq:proof_input}
	u_{i|k}^* = v_{i|k}^* + K_{\mathrm{LQR}}\bar{e}_{i|k}^* = K_{\mathrm{LQR}}\bar{x}_{i|k}^*
	\end{align}
	is the unique optimal and feasible solution in~\eqref{eq:MPC} at each time step $k\in\{0,1,\ldots\}$ and for each prediction step $i\in\{0,\ldots,N-1\}$.
	The claim then follows directly from the initialization of~\eqref{eq:MPC}, i.e., $u(k) = K_{\mathrm{LQR}}\x(k)$ due to constraints~\eqref{eq:MPC_initial_state} and~\eqref{eq:MPC_initial_error}.
	Due to the quadratic cost~\eqref{eq:MPC_objective} and the choice of $P$ according to Assumption~\ref{ass:terminal_cost}, we have that~\eqref{eq:proof_input} is the unique optimal solution when no constraints are active, i.e., given it is feasible in~\eqref{eq:MPC}.
	It remains to show that~\eqref{eq:proof_input} is always feasible in~\eqref{eq:MPC}.
	From~\eqref{eq:proof_input} and~\eqref{eq:MPC_mean_state} we have that $v_{i|k}^* = K_{\mathrm{LQR}}z_{i|k}^*$, which implies that at each time step $k\in\{0,1,\ldots\}$  the predicted nominal state evolves according to $z_{i+1|k}^* = (A+BK_{\mathrm{LQR}})z_{i|k}^*$ initialized with $z_{i|k}^*=z(k)$.
	Consequently, the closed-loop nominal states evolves according to $z(k+1) = (A+BK_{\mathrm{LQR}})z(k)$.
	Thus, for each time step $k\in\{0,1,\ldots\}$ we have that $z_{i|k}^*=z(k+i)$ for all prediction steps $i\in\{0,\ldots,N-1\}$.
	Similarly, we have that $v_{i|k}^*=v(k+i)$.
	Due to the non-conservative constraint tightening using half-space PRS, we have that the closed-loop nominal state $z(k)$ and input $v(k)$ satisfy for all $k\in\{0,1,\ldots\}$
	\begin{align*}
		&{h_j^x}^\top z(k) \le 1 - c_{j,k}^x \Leftrightarrow \mathrm{Pr}({h_j^x}^\top x(k) \le 1) \ge p_j^x, \forall j \in \{1,\ldots,n_c^x\},  \\
		&{h_j^u}^\top v(k) \le 1 - c_{j,k}^u \Leftrightarrow \mathrm{Pr}({h_j^u}^\top u(k) \le 1) \ge p_j^u, \forall j \in \{1,\ldots,n_c^u\},
	\end{align*}
	where the right hand side is satisfied due to the assumption of the LQG controller being feasible in~\eqref{eq:SOC}.
	Since the open-loop and closed-loop nominal states and inputs evolve equivalently, it follows that the constraints~\eqref{eq:MPC_constraint_z} and~\eqref{eq:MPC_constraint_v} are satisfied and thus~\eqref{eq:proof_input} is feasible in~\eqref{eq:MPC}, which concludes the proof.
\end{proof}

Note that problem~\eqref{eq:MPC} without terminal constraint~\eqref{eq:MPC_terminal_constr} is in general \emph{not} recursively feasible, and consequently also does not ensure closed-loop chance constraint satisfaction.
However, assuming the unconstrained LQG solution is feasible with respect to~\eqref{eq:SOC}, Theorem~\ref{thm:LQG} establishes that~\eqref{eq:proof_input} is the unique feasible and optimal solution of the IOF-SMPC problem~\eqref{eq:MPC} at each time step~$k$ and prediction step~$i$, which shows that recursive feasibility is satisfied in this specific case.

\section{Numerical Example} \label{sec:numerical}

\begin{figure}
	\centering
	\includegraphics[width=\columnwidth]{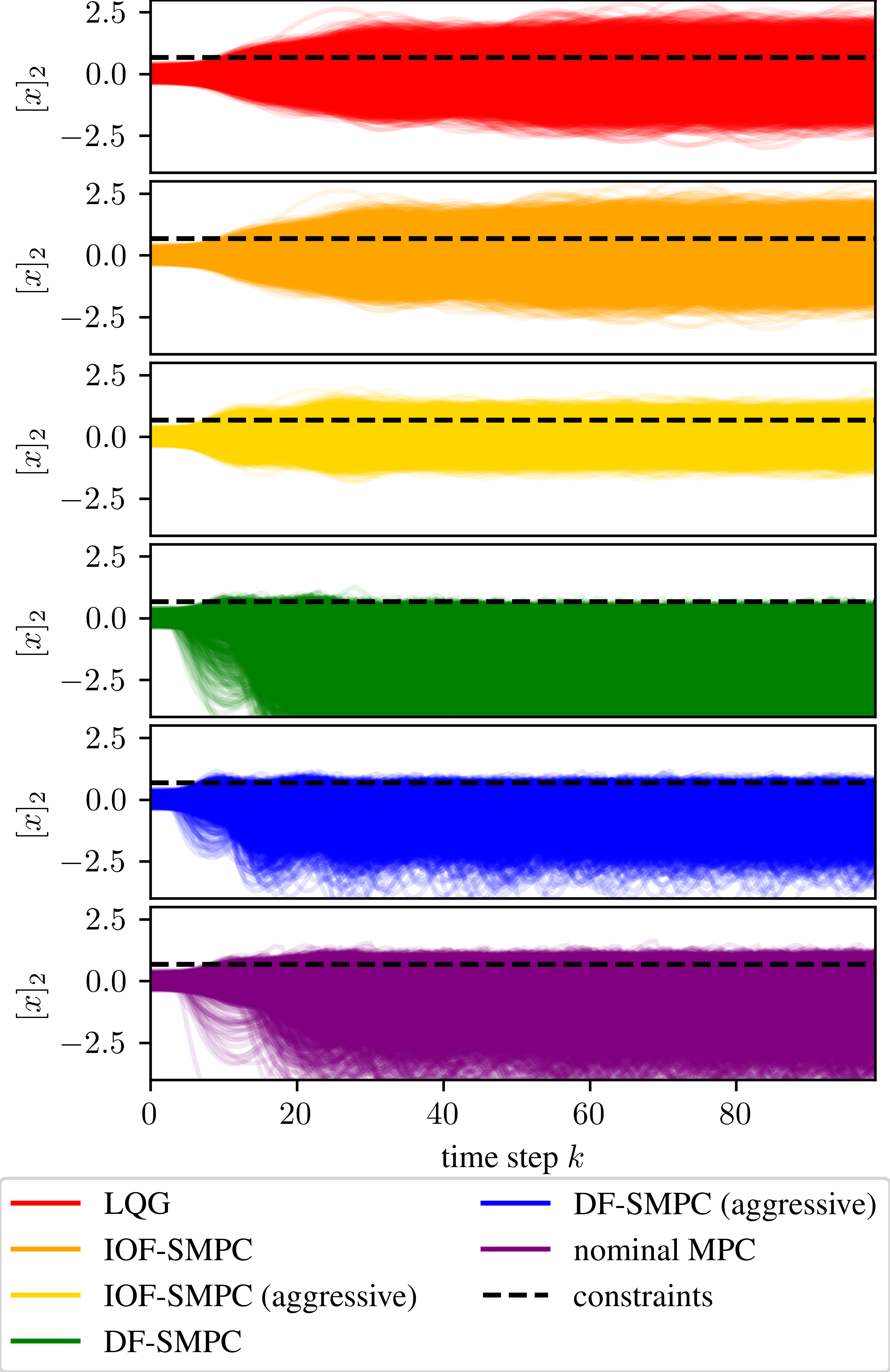}
	\caption{Simulation results for all six different output feedback controllers with zero mean initial condition.}
	\label{fig:zero_initial}
\end{figure}
In this section, we present a numerical example which shows the benefits of an indirect-feedback compared to a direct-feedback SMPC approach, as similarly done in~\cite{Hewing2020} for the case of full state measurements.
We also show that the LQG solution is recovered in case it is feasible (compare Theorem~\ref{thm:LQG}).
For this purpose, we revisit the four dimensional integrator chain from~\cite[Sec. IV]{Hewing2020} with additional output equation.
We consider a system of the form~\eqref{eq:sys} with 
\begin{align*}
	A =& \begin{bmatrix}
	1 & T_s & \nicefrac{T_s^2}{2!} & \nicefrac{T_s^3}{3!} \\
	0 & 1 & T_s & \nicefrac{T_s^2}{2!} \\
	0 & 0 & 1 & T_s \\
	0 & 0 & 0 & 1 
	\end{bmatrix},\ &&B= \begin{bmatrix}
	\nicefrac{T_s^4}{4!} \\
	\nicefrac{T_s^3}{3!} \\
	\nicefrac{T_s^2}{2!} \\
	T_s
	\end{bmatrix}, \\
	C =& \begin{bmatrix}
	1 & 0 & 0 & 0
	\end{bmatrix},\ &&
\end{align*}
and $T_s = 0.1$.
The system is subject to i.i.d. disturbances $w_x(k)$ and measurement noise $w_y(k)$ with $\Sigma^{w_x} = BB^\top$ and $\Sigma^{w_y} = 0.1^2$, and an initial state distribution with mean $\mu^{x_0}$ and covariance matrix $\Sigma^{x_0} = 0.1^2 I$.
The weighting in the quadratic cost in~\eqref{eq:SOC_objective} and~\eqref{eq:MPC_objective} is chosen as $Q=I$ and $R=0.1$ with $P$ chosen according to~\eqref{eq:LQR_riccati}.
Finally, we select a single half-space constraint on the systems state~\eqref{eq:constraint_x} with
\begin{align*}
	{h^x}^\top = \begin{bmatrix}
	0 & \nicefrac{1}{\sqrt{[\Sigma_\infty]_{2,2}}} & 0 & 0
	\end{bmatrix},\ p = 84\%,
\end{align*}
with $[\Sigma_\infty]_{2,2}$ being the element in the second row and column of the solution to the Lyapunov equation
\begin{align*}
	\Sigma_\infty = \tilde{A}(K_{\mathrm{LQR}})\Sigma_\infty \tilde{A}(K_{\mathrm{LQR}})^\top + \tilde{B}\Sigma^{\tilde{w}}\tilde{B}^\top,
\end{align*}
with $K_{\mathrm{LQR}}$ obtained from~\eqref{eq:LQR_gain}.
Note that this specific choice of the chance constraint ensures that it is asymptotically satisfied by an LQG controller (compare~\cite[Sec. IV]{Hewing2020}).
The numerical example was implemented in python using CasADi~\cite{Andersson2019} and the solver IPOPT~\cite{Waechter2005}.
The implementation is available online\footnote{\href{https://doi.org/10.3929/ethz-b-000608224}{https://doi.org/10.3929/ethz-b-000608224}}.

\begin{table*}
	\centering
	\caption{Average cost and empirical constraint violation}
	\label{tab:results}
	\begin{tabular}{c|cc|cc}
		\hline
		\multirow{2}{*}{Controller} &  \multicolumn{2}{c}{Zero Mean Initial Condition} & \multicolumn{2}{|c}{Non-Zero Mean Initial Condition} \\
		\cline{2-5}
		& Average Cost & Empirical Violation  & Average Cost & Empirical Violation \\
		\hline
		LQG & 377.7 & 16.1\% & 439.2 & 35.3\% \\
		IF-SMPC & 377.7 & 16.1\% & 481.7 & 16.1\% \\
		IF-SMPC (aggressive) & 2316.5 & 6.4\% & 2522.5 & 10.6\% \\
		DF-SMPC & 554981.6 & 0.3\% & 695383.0 & 2.5\% \\
		DF-SMPC (aggressive) & 509743.4 & 0.7\% & 780533.9 & 1.8\% \\
		Nominal MPC & 146663.9 & 5.5\% & 344724.2 & 9.6\%
	\end{tabular}
\end{table*}

We compare six different control pipelines: A classical unconstrained LQG, our IOF-SMPC approach with LQR controller as tube controller, our IOF-SMPC with an aggressive tube controller (an LQR with $Q = \mathrm{diag}(1,100,1,1)$ and $R=0.1$), a direct feedback SMPC (DF-SMPC) as discussed in~\cite{Hewing2020} with both the LQR and an aggressive version as tube controller, and a nominal MPC approach.
For the DF-SMPC approaches and the nominal MPC, the control problem is initialized with the current state estimate resulting from~\eqref{eq:state_observer} if feasible, or with the predicted (nominal) state otherwise.
In all cases, the current state estimates are obtained using a KF~\eqref{eq:state_observer}.
In Fig.~\ref{fig:zero_initial}, we show $10^5$ simulations for each of the control approaches with $\mu^{x_0} = 0$, i.e., an initial state distributed according to a zero mean Gaussian distribution.
The corresponding average closed-loop costs and empirical constraint violations for all six controllers are shown in column two and three of Table~\ref{tab:results}.
It is visible that the LQG and our IOF-SMPC with LQR tube controller lead to the same closed-loop results, average cost and level of constraint satisfaction.
Since the assumptions of Theorem~\ref{thm:LQG} hold in this case, the performance of the LQG is recovered by our IOF-SMPC approach with LQR tube controller.
This is not the case for the IOF-SMPC with aggressive tube controller, which thus leads to a different closed-loop behavior compared to the LQG.
The IOF-SMPC with aggressive tube controller still satisfies the chance constraints, but at a higher probability level than specified.
In the case of both DF-SMPC approaches and the nominal MPC, conservative satisfaction of the chance constraints lead to very large closed-loop cost compared to an LQG and both IOF-SMPC approaches.
Note that the empirical constraint violation is obtained by calculating the violation probability over all $10^5$ trajectories at each time step, and then maximizing over the time steps to find the maximum empirical violation.
This introduces a slight bias towards larger numbers for a finite number of samples or due to small numerical errors, and can thus explain that the empirical violation is slightly larger than specified for both the LQG and the IOF-SMPC with LQR tube controller.

We additionally show the resulting average cost and empirical violation for $10^5$ simulations with non-zero mean initial state distribution, i.e., with $\mu^{x_0} = \begin{bmatrix}
-1.5 & 0 & 0 & 0
\end{bmatrix}^\top$,
in column four and five of Table~\ref{tab:results}.
It can be seen that the LQG controller is not feasible in this case, resulting in a large empirical constraint violation.
Both IOF-SMPC again (empirically) satisfy the chance constraints, while the DF-SMPC and nominal MPC approaches result in conservative constraint satisfaction and large average cost.

\section{Conclusion}

In this paper, we presented an output feedback SMPC approach for linear systems with Gaussian uncertainties subject to probabilistic chance constraints.
The presented approach combines a linear KF with an indirect feedback SMPC, where the state estimation uncertainty is considered during the design phase.
The resulting indirect output feedback SMPC (IOF-SMPC) controller is shown to be recursively feasible and ensures that the chance constraints are satisfied in closed-loop.
Additionally, under appropriate design choices, the solution of the IOF-SMPC is equivalent to the solution to the unconstrained LQG problem, in case it is feasible.





\section*{DATA AVAILABILITY STATEMENT}
The code to reproduce the numerical example in Section~\ref{sec:numerical} is available in the ETH Zurich Research Collection, \href{https://doi.org/10.3929/ethz-b-000608224}{https://doi.org/10.3929/ethz-b-000608224}.


\section*{ACKNOWLEDGMENT}

The authors would like to thank Johannes Köhler for the discussions on the topic and the feedback on an initial draft, and Ueli Wechsler for sparking our interest in this topic.


\bibliographystyle{IEEEtran}
\bibliography{bibliography}

\end{document}